\documentclass{ieeeaccess}
\usepackage{cite}
\usepackage{amsmath,amssymb,amsfonts}
\usepackage{algorithm}

\usepackage{algcompatible}
\usepackage{listings}
\usepackage{textcomp}
\usepackage{longtable}
\usepackage{tikz}
\usepackage{tikz-cd}
\usepackage{graphicx}

\usepackage{pgfplots}
\usetikzlibrary{shapes,arrows}
\usetikzlibrary{pgfplots.groupplots}
\usepackage{colortbl}
\usepackage{arydshln}
\NewSpotColorSpace{PANTONE}
\AddSpotColor{PANTONE} {PANTONE3015C} {PANTONE\SpotSpace 3015\SpotSpace C} {1 0.3 0 0.2}
\SetPageColorSpace{PANTONE}%

\usepackage{multirow}
\usepackage{tabularx,booktabs}
\newcolumntype{L}[1]{>{\raggedright\arraybackslash}p{#1}}
\newcolumntype{C}[1]{>{\centering\arraybackslash}p{#1}}
\newcolumntype{R}[1]{>{\raggedleft\arraybackslash}p{#1}}
\usepackage{url}
\usepackage{soul}
\usepackage{longtable}
\usepackage{array}
\usepackage{arydshln}
\usepackage{amsthm}

\newtheorem{theorem}{Theorem}
\newtheorem{remark}{Remark}

\newtheorem{problem}{Problem}

\usepackage{amsmath}
\usepackage{amssymb}
\usepackage{amsfonts}
\usepackage{amscd}
\usepackage{array}
\usepackage{multirow}
\usepackage{booktabs}
\usepackage{caption}
\usepackage{graphicx}

\makeatletter

\newcommand{\thickhline}{%
    \noalign {\ifnum 0=`}\fi \hrule height 0.6pt
    \futurelet \reserved@a \@xhline
}
\newcolumntype{"}{@{\hskip\tabcolsep\vrule width 1pt\hskip\tabcolsep}}

\makeatother

\def\BibTeX{{\rm B\kern-.05em{\sc i\kern-.025em b}\kern-.08em
    T\kern-.1667em\lower.7ex\hbox{E}\kern-.125emX}}
    
\begin{document}
\history{Date of publication xxxx 00, 0000, date of current version xxxx 00, 0000.}
\doi{10.1109/ACCESS.2017.DOI}

\title{POKEx: Performance analysis of POKE-key exchange and SIDH-variants }
\author{\uppercase{Hyeonhak Kim\authorrefmark{1}}, \uppercase{Suhri Kim\authorrefmark{2}}}
\address[1]{School of Cybersecurity, Korea University, Seoul 02841, South Korea}
\address[2]{School of Mathematics, Statistics and Data Science, Sungshin Women’s University, Seoul, 02844, South Korea}
\tfootnote{This work was supported by the Sungshin Women's University Research Grant of 2024 (H20240079).}

\markboth
{Author \headeretal: Preparation of Papers for IEEE TRANSACTIONS and JOURNALS}
{Author \headeretal: Preparation of Papers for IEEE TRANSACTIONS and JOURNALS}

\corresp{Corresponding author: Suhri Kim (e-mail: suhrikim@sungshin.ac.kr).}

\begin{abstract}
In this paper, we present a comparative performance analysis of the POK\'E-based key exchange and SIDH variants. SIDH gained attention for its small key size and efficient performance, and has been selected as an alternate candidate in NIST PQC Round 4. However, following the key recovery attack by Castryck and Decru in 2022, SIDH was shown to be vulnerable to polynomial-time attacks on classical computers, undermining its security and removing it from consideration. Given that SIDH was regarded as the leading isogeny-based algorithm, several countermeasures, such as MSIDH, MD-SIDH, and bin/terSIDH have been proposed. However, these approaches still face performance limitations. Meanwhile, POK\'E, proposed by Basso and Maino at Eurocrypt 2025, is an isogeny-based public key encryption scheme that combines a SIDH-like protocol with higher-dimensional isogenies and has drawn attention for its efficient performance. In this work, we adapt POK\'E into a key exchange algorithm and benchmark it against M-SIDH, terSIDH, and CSIDH. Targeting NIST security level 1, POK\'E-based KEM is approximately 21.21 times faster than terSIDH and 64.97 times faster than CSIDH, showing that the POK\'E-based KEM is currently the most promising isogeny-based key exchange candidate.
\end{abstract}

\begin{keywords}
Isogeny, Post-quantum cryptography, SIDH, key recovery attack, POK\'E
\end{keywords}

\titlepgskip=-15pt

\maketitle

\section{Introduction}
\label{intro}
The start of isogeny-based cryptography can be traced back to the work of Couveignes in \cite{couveignes2006hard}, which is now known as the CRS scheme. However, it was the introduction of SIDH (Supersingular Isogeny Diffie-Hellman) by De Feo and Jao that marked the beginning of active research in this area \cite{jao2011towards}. In contrast to the original CRS scheme and standard elliptic curve cryptography, which are based on ordinary elliptic curves, SIDH uses supersingular elliptic curves. As the endomorphism ring of the supersingular elliptic curve is non-commutative, this allows SIDH to withstand the attack proposed in \cite{childs2014constructing} that solves the CRS scheme with subexponential complexity. Moreover, various optimization efforts, including the influential work by Costello et al. in \cite{costello2016efficient}, lead to efficient performance, which significantly contributed to the proposal of the encapsulation algorithm SIKE in the NIST PQC standardization project. Unlike many other algorithms submitted to the PQC standardization project, which required increasingly larger key or signature sizes due to newly discovered attacks, SIKE stood out by maintaining resistance against all known attacks for nearly a decade. In fact, since its introduction, no attack more efficient than the Meet-in-the-Middle (MitM) approach was found \cite{costello2021case}, and analysis showed that smaller parameters could be used while maintaining the same security level, which strengthens the SIKE's advantage in terms of key size. Some studies even suggested that, for SIDH-based algorithms, classical attacks could be more effective than quantum ones, further supporting its consideration as an alternate candidate in Round 4.

However, in 2023, Castryck and Decru showed that the private key in SIDH can be recovered in classical polynomial time \cite{castryck2023efficient}. Due to the non-commutative property of supersingular elliptic curves, SIDH requires publishing the computation of other parties' public keys with one's private key. This is called \textit{torsion point information}. Torsion-point information is not seen in general key exchange algorithms and is unique to SIDH-based cryptography. Hence, a significant amount of research has targeted this feature as a potential weakness \cite{petit2017faster,de2021improved,de2021seta,fouotsa2022isogeny}. The Castryck–Decru attack in 2023 was the first to fully exploit it, demonstrating that SIDH can be broken in polynomial time \cite{castryck2023efficient}. The attack uses the torsion point information together with Kani's glue-and-split algorithm. Because this attack fundamentally exploits torsion point information, cryptosystems such as SIDH, BSIDH, and SETA that use torsion point information can be broken in polynomial time. Subsequently, Maino et al. \cite{maino2023direct} and Robert \cite{robert2023breaking} independently showed that SIDH-based algorithms remain vulnerable even with arbitrary starting curves, confirming that SIDH-based schemes are no longer secure.

As SIDH is regarded as one of the most efficient among isogeny-based cryptosystems, several studies have been conducted to preserve its advantages while mitigating the Castryck–Decru attack. The Castryck-Decru attack relies on three main facts: 1) torsion-point information that SIDH-based cryptography publishes, 2) the degree of the isogeny connecting two elliptic curves is known, and 3) this degree is power-smooth for efficiency. Among these, torsion-point information plays the most critical role. To address these points, \cite{fouotsa2023m} proposed two countermeasures: M-SIDH (masked torsion point SIDH), which hides torsion-point information by masking it with random values, and MD-SIDH (masked-degree SIDH), which masks the degree of the isogeny used. However, unlike conventional SIDH-based cryptosystems, which use primes of the form $p=2^{e_1}3^{e_2}-1$ that enable efficient reduction, M-SIDH and MD-SIDH adopt primes of the form $p=\prod \ell_i-1$, where $\ell_i$s are odd primes. This choice prevents efficient reduction and requires primes more than ten times larger for the same security level (M-SIDH requires a 5911-bit prime for NIST security level 1), resulting in reduced efficiency.

In \cite{basso2023new}, Basso et al. proposed binSIDH (binary SIDH) and terSIDH (ternary SIDH) to counter the key recovery attack. Both binSIDH and terSIDH introduce the concept of artificial orientation and adopt new forms of secret keys. Among them, terSIDH provides about 1.6 bits of security per private key bit and offers relatively fast performance, making it a competitive SIDH variant. However, the total isogeny degree of terSIDH depends on the private key, causing variations in execution time and exposing it to potential timing attacks. Overall, while several SIDH variants have been proposed to address the Castryck–Decru attack, none have proven to be truly practical, highlighting the need for exploring a wider range of scheme designs.

Meanwhile, the Castryck–Decru attack prompted a major redesign of SIDH-based cryptosystems. Traditionally, isogenies were represented by their corresponding cyclic kernel. Knowing the kernel allows the isogeny to be computed using Vélu’s formulas, and the smooth order of torsion points is required for an efficient isogeny computation. However, key-recovery attacks have shown that it is possible to perform an efficient isogeny computation even when the degree is not smooth.
Hence, knowing a pair of curves, their associated torsion points, and the isogeny degree is sufficient to compute the connecting isogeny. Applying this new high-dimensional approach has yielded more efficient results, as seen in schemes such as SQIsign and SCALLOP \cite{basso2024sqisign2d,nakagawa2024sqisign2d,duparc2024sqiprime,chen2024scallop,panny2024klapoti}. However, applying such a high-dimensional approach to SIDH to counter the key recovery attack was first done in \cite{basso2025poke}.

In \cite{basso2025poke}, Basso and Maino proposed POK\'E, an isogeny-based encryption algorithm that constructs a SIDH-like commutative diagram integrating a two-dimensional representation. As mentioned above, SIDH variants designed to counter existing key-recovery attacks are not only defined over larger finite fields than SIDH, but also use field structures that are not reduction-efficient, which has the greatest impact on their performance. On the other hand, POK\'E is based on a finite field with a prime of the form $p=2^a 3^b 5^c f-1$, a structure that is central to its high efficiency. In POK\'E, one party computes isogenies of non-smooth degree using the two-dimensional representation, while the other computes smooth-degree isogenies in the traditional way. For a security parameter $\lambda$, POK\'E requires a prime of size $10\lambda/3$ bits. This means that current public keys and ciphertexts are small and compare favorably with most protocols in the literature, resulting in very efficient performance.

\subsection{Our Contributions}

 As noted above, POK\'E benefits from relatively small key sizes and efficient finite field structure. Furthermore, POK\'E is an IND-CPA secure public-key encryption (PKE) scheme, and it is well known that the IND-CPA PKE can be transformed into an IND-CCA secure KEM via the Fujisaki-Okamoto (FO) transform \cite{fujisaki1999secure}. Although this fact is mentioned in \cite{basso2025poke}, the application of the FO-transform to POK\'E has not been explicitly formalized. In addition, although various isogeny-based key exchange protocols have been proposed in response to key recovery attack, a unified side-by-side comparison of performance is unavailable in the literature.

In this work, we address this gap by performing a comparative analysis of POK\'E-based key exchange and other SIDH variants, evaluating their performance under a unified experimental setting. Specifically, we first explicitly define a KEM derived from POK\'E and analyze its security. Then, we perform a side-by-side evaluation against other isogeny-based key exchange protocols, including terSIDH, M-SIDH, and CSIDH, at the same security level. The contributions of this paper are summarized as follows:

 \begin{itemize}

     \item This paper provides a complete description of a key encapsulation mechanism (KEM) built on POK\'E -- which we denote by POKEx. Although \cite{basso2025poke} provides a high-level description of a POK\'E-based key exchange method, its main focus is the encryption algorithm, and it does not present details of the KEM. This paper supplies that detailed specification. We also provide a C implementation based on \cite{cryptoeprint:2025/1458}. Our implementation of POKEx can be found at:
\begin{center}
    \texttt{https://github.com/gusgkr0117/poke}
\end{center}

     \item This paper provides a comparison of key sizes and performance results across representative isogeny-based key exchange methods—namely, M-SIDH, terSIDH, CSIDH, and a POK\'E-based KEM (POKEx). We evaluate POKEx against these isogeny-based algorithms targeting NIST security level 1. The results show that POKEx is the most efficient, achieving speedups of 21.21 times over terSIDH and 64.97 times over CSIDH. Details of our implementation are provided in Section 4.

 \end{itemize}
\subsection{Organization}
This paper is organized as follows. In Section \ref{sec:pre}, the background knowledge necessary for understanding this paper is provided. Section \ref{sec:pre}, describes the SIDH, Castryck-Decru attack, M-SIDH and bin/terSIDH that counter the Castryck-Decru attack. Section \ref{sec_poks} describes POK\'E and POK\'E-based KEM. Section \ref{sec_im} shows the implementation results, and the conclusion is drawn in Section \ref{sec_conclude}.

\section{Preliminary}
\label{sec:pre}

This section introduces the background knowledge necessary for the paper. This section specifically discusses other isogeny-based key exchange algorithms in comparison with POK\'E. First, SIDH and the Castryck-Decru attack are introduced, followed by an introduction to the M-SIDH and bin/terSIDH that counters the Castryck-Decru attack.

\subsection{SIDH and Castryck-Decru attack}

\subsubsection{Outline of SIDH}
Let $p$ be a prime $p=\ell_A^{e_A}\ell_B^{e_B}\pm 1$ for a relatively prime $\ell_A$ and $\ell_B$, and $\ell_A^{e_A}\approx \ell_B^{e_B}$. Then, select supersingular elliptic curve $E$ over $\mathbb{F}_{p^2}$ having order $(\ell_A^{e_A}\ell_B^{e_B})^2$. Let $\langle P_A, Q_A \rangle$ and $\langle P_B, Q_B \rangle$ be the basis of the torsion subgroup $E[\ell_A^{e_A}]$ and $E[\ell_B^{e_B}]$, respectively. $E[\ell_A^{e_A}]$ is taken as Alice's subgroup, and $E[\ell_B^{e_B}]$ as Bob's subgroup.

Alice computes $\langle R_A \rangle = \langle P_A +[n_A]Q_A\rangle$ using her private key $n_A$. Then, Alice generates an isogeny $\phi_A: E\rightarrow E_A$, where $\ker \phi_A=\langle R_A\rangle$, using the Velu's formula. Alice also computes $\phi_A(P_B)$ and $\phi_A(Q_B)$ and sends $(\phi_A(P_B),\phi_A(Q_B), E_A)$ to Bob. Bob follows a similar process and then sends $(\phi_B(P_A),\phi_B(Q_A), E_B)$ to Alice, where $\phi_B: E\rightarrow E_B=E/\langle R_B\rangle$. Using the points and curve received from Bob, Alice computes the group $\langle R_A' \rangle = \langle \phi_B(P_A)+[n_A]\phi_B(Q_A)\rangle$ and obtains the image curve $E_{AB}=E_B/\langle R_A'\rangle$ through the isogeny from $E_B$ whose kernel is $\langle R_A'\rangle$. Bob follows a similar process and obtained $E_{BA}=E_A/\langle R_B'\rangle$. The shared secret key is the $j$-invariant of the image curves $j(E_{AB})=j(E_{BA})$.

\subsubsection{Castryck-Decru attack on SIDH}

As supersingular elliptic curves are non-commutative, SIDH requires each party to apply their private key to the other party’s public key in order to derive the shared secret. The Castryck–Decru attack exploits this property, combining the leaked torsion-point information with Kani’s glue-and-split algorithm to break SIDH-based schemes in polynomial time.

Let the two subgroups of the elliptic curve $E_0$ whose orders are coprime to each other be $H_1$ and $H_2$. Consider two isogenies $\phi$ and $\gamma$, where $\phi: E_0\rightarrow E_1 = E/H_1$ and $\gamma: E_0\rightarrow E_2 = E/H_2$. Let $\phi': E_2\rightarrow E_3$ be an isogeny with kernel $\gamma(H_1)$ and $\gamma': E_1\rightarrow E_3$ be an isogeny with kernel $\phi(H_2)$ :
\[
    \begin{tikzcd}
    E_0 \arrow[r, "\phi"] \arrow[d, "\gamma"'] & E_1 \arrow[d, "\gamma'" ] \\
    E_2 \arrow[r, "\phi'"'] & E_{3}
    \end{tikzcd}
    \]
Let $N=\#H_0+\#H_1$ and $P_0, Q_0$ be the basis of $E_0[N]$. Define $P_1, Q_1, P_2$ and $Q_2$ as follows:
\begin{align*}
    (P_1, Q_1) &= (\phi(P_0), \phi(Q_0))\\
    (P_2, Q_2) &= (\gamma(P_0), \gamma(Q_0))
\end{align*}
Define the following two-dimensional isogeny:
\begin{align*}
    \rho = \begin{pmatrix}
        \phi & \hat{\gamma}'\\
        -\gamma & \hat{\phi}'
    \end{pmatrix}:E_0\times E_3\rightarrow E_1\times E_2
\end{align*}
Kani prove that $\rho$ is an isogeny with kernel $H=\langle([-d_1 ]P_0,\gamma'\circ\phi(P_0)), ([-d_1]Q_0,\gamma'\circ\phi(Q_0))\rangle$ where $d_1 = \#H_1$. An isogeny between abelian surfaces of this form is called a $(N,N)$-isogeny. The Castryck-Decru attack divides a given isogeny path of smooth degree into several isogeny fragments of small degrees, and guesses each small isogeny with Kani's lemma as a verification tool. If the computation of a two-dimensional isogeny $\rho$ made up of a guessed isogeny path is a product of elliptic curves, then the guessing is correct; however, if it is an unsplit Jacobian curve, then the guessing is incorrect. This leads to a polynomial time attack on SIDH.

\subsection{M-SIDH}

In \cite{fouotsa2023m}, Fouotsa et al. proposed a masking-based countermeasure to mitigate the key recovery attack on SIDH. They proposed M-SIDH and MD-SIDH, which mask the torsion point information and the isogeny degree, respectively. This paper mainly focuses on M-SIDH, which uses a relatively smaller finite field than MD-SIDH for the same security level. 

The core of M-SIDH is to mask torsion point information using a random value $\alpha$. Instead of sending $\phi(P),\phi(Q)$, each party transmits $[\alpha]\phi(P), [\alpha]\phi(Q)$. Let $B$ denote the order of the torsion subgroup of the other party. Since the isogeny degree $deg \phi$ is fixed, the Weil pairing satisfies $e_B([\alpha]P,[\alpha]Q)=e_B(P,Q)^{\alpha^2\deg\phi}$. If $\alpha$ is used such that $\alpha^2\equiv 1 \mod B$, the other party can verify through the Weil pairing whether the masking was applied correctly. To prevent attackers from easily guessing $\alpha$, the parameter $B$ is selected so that the number of $\alpha$ that satisfy $\alpha^2\equiv 1 \mod B$ is sufficiently large.

\subsubsection{Parameter Generation}

Let $\lambda$ be a security level and $t=t(\lambda)$ be an integer determined by $\lambda$. The prime $p$ is of the form $p=ABf-1$, where $A=\prod_{i=1}^t\ell_i$ and $B=\prod_{i=1}^t q_i$. The $\ell_i$s and $q_i$s are all distinct prime numbers, $A, B$ are relatively prime, $A\approx B\approx\sqrt{p}$, and $f$ is the cofactor. For a supersingular elliptic curve $E$ defined over $\mathbb{F}_{p^2}$, let $\langle P_A, Q_A \rangle$ and $\langle P_B, Q_B\rangle$ be the basis of $E[A]$ and $E[B]$, respectively. Define $\mu_2(N)$ as follows:
\begin{align}
    \mu_2(N)=\{x\in \mathbb{Z}/N\mathbb{Z} \mid x^2\equiv 1 \mod N\}
\label{masked}
\end{align}

\subsubsection{Public Key Generation}
Alice selects random integers $\alpha$ and $a$ in $\mu_2(B)$ and $\mathbb{Z}/A\mathbb{Z}$, respectively. Then, Alice computes an isogeny $\phi_A:E\rightarrow E_A$, where $\ker\phi_A = \langle P_A+[a]Q_A\rangle$. Alice sends $(E_A, [\alpha]\phi_A(P_B),[\alpha]\phi_A(Q_B))$ to Bob. Similarly, Bob selects random integers $\beta$ and $b$ in $\mu_2(A)$ and $\mathbb{Z}/B\mathbb{Z}$, respectively. Then, Bob computes an isogeny $\phi_B:E\rightarrow E_B$, where $\ker\phi_B=\langle P_B+[b]Q_B\rangle$, and sends $(E_B,[\beta]\phi_B(P_A),[\beta]\phi_B(Q_A))$ to Alice.

\subsubsection{Shared Secret}
Let $(E_B,R_A,S_A)$ be the tuple Alice receives. Alice checks $e_A(R_A,S_A)$ equals $e_A(P_A,Q_A)^B$. If not, the algorithm terminates. Then, Alice computes an isogeny $\phi_A':E_B\rightarrow E_{AB}=E_B/\langle R_A+[a]S_A\rangle$. Bob also computes $e_B(R_B,S_B)$ and checks $e_B(R_B,S_B)$ equals $e_B(P_B,Q_B)^B$ using the information $(E_A,R_B,S_B)$ received from Alice. Then, Bob computes an isogney $\phi_B':E_A\rightarrow E_{BA}=E_A/\langle R_B+[b]S_B\rangle$. The shared secret between Alice and Bob is $j(E_{AB})=j(E_{BA})$.

\subsection{bin/terSIDH}

In \cite{basso2023new}, they present binSIDH (binary SIDH) and terSIDH (ternary SIDH) as countermeasures against the Castryck–Decru attack. Both introduce the concept of artificial orientation and adopt new forms of secret keys. Among them, terSIDH provides about 1.6 bits of security per bit of the private key and offers reasonably fast execution. For this reason, we focus on terSIDH in this paper. 

An artificial $A$-orientation of a supersingular elliptic curve $E$ over $\mathbb{F}_{p^2}$ is defined as a pair of cyclic subgroups $\mathfrak{A}=(G_1, G_2)$ of order $A$, where $G_1, G_2 \subset E[A]$ satisfying $G_1 \cap G_2 = \{0\}$. The pair $(E, \mathfrak{A})$ is referred to as an artificial $A$-oriented curve. This notion enables the construction of isogenies whose kernels are direct sums of subgroups taken from $G_1$ and $G_2$, that is, kernels of the form $H_1 \oplus H_2$ with $H_i \subseteq G_i$.

\subsubsection{Setup}
Let $\lambda$ be a security level and $t=t(\lambda)$ be an integer determined by $\lambda$. The prime $p$ is of the form $p=ABf-1$, where $A=\prod_{i=1}^t\ell_i$ and $B=\prod_{i=1}^t q_i$. The $\ell_i$s and $q_i$s are all distinct prime numbers, $A, B$ are relatively prime, $A\approx B\approx\sqrt{p}$, and $f$ is the cofactor. Let $E_0$ be a supersingular elliptic curve defined over $\mathbb{F}_{p^2}$, $\mathfrak{A}$ be an artificial $A$-orientation on $E_0$ and let $\mathfrak{B}$ be an artificial $B$-orientation on $E_0$. The public parameters are $E_0, p, A, B, \mathfrak{A}$, and $\mathfrak{B}$.

\subsubsection{Public Key generation}
Alice samples a secret vector $\vec{a}$ from $\{0,1,2\}^t$, uniformly at random. Then, Alice computes the $\mathfrak{A}$-oriented isogeny $\phi_A:E_0\rightarrow E_A$ defined by $\vec{a}$, where $\deg \phi_A \mid A$. Alice computes the push-forward $\mathfrak{B}'$ of $\mathfrak{B}$ on $E_A$ through $\phi_A$ and sends $(E_A, \mathfrak{B}')$ to Bob. Bob follows a similar process and sends $(E_B, \mathfrak{A}')$ to Alice with his secret vector  $\vec{b} \in \{0,1,2\}^t$.

\subsubsection{Shared secret}
Upon receiving Bob's public key $(E_B, \mathfrak{A}')$, Alice checks that $\mathfrak{A}'$ is an artificial $A$-orientation on $E_B$. If not, she aborts. Then, Alice computes the $\mathfrak{A}'$-oriented isogeny $\phi_A':E_B\rightarrow E_{BA}$ of degree $A$ defined by $\vec{a}$. Bob follows a similar process and computes the $\mathfrak{B}'$-oriented isogeny $\phi_B':E_A\rightarrow E_{AB}$. The shared secret key between Alice and Bob is $j(E_{AB})=j(E_{BA})$.

\section{POK\'E}
\label{sec_poks}
In \cite{dartois2024thetamodel}, the computation of a $(2,2)$-isogeny becomes very fast based on the theta model, which leads to the two-dimensional representation of an isogeny with an arbitrary degree. This randomized degree makes the Castryck-Decru attack infeasible. In \cite{basso2025poke}, Basso proposed a fast key exchange scheme by combining the SIDH structure and the two-dimensional isogeny representation.

In this section, we present the POK\'E encryption algorithm and the key exchange algorithm derived from it. Although the name POK\'E originates from \textit{POint-based Key Exchange}, \cite{basso2025poke} does not explicitly describe a key exchange algorithm. Since a key exchange algorithm is as important as the performance analysis, we first explain the encryption process given in \cite{basso2025poke}, and then describe the key exchange algorithm based on it.

POK\'E uses a prime $p$ of a form $p = 2^a 3^b 5^c f -1$ such that $a \approx \lambda$, $3^b \approx 2^{\lambda/2}$ and $5^c \approx 2^{\lambda/3}$. Let $E_0 : y^2 =x^3 + x$ of $j$-invariant 1728 as the starting curve with known endomorphism ring. Let $(P_0, Q_0)$ be a $2^a$-torsion basis on $E_0$, $(R_0, S_0)$ be a $3^b$-torsion basis and $(X_0, Y_0)$ be a $5^c$-torsion basis.

\subsection{POK\'E Encryption algorithm }
\subsubsection{Key Generation}
Here, we adopt the heuristic version of the keygen algorithm in \cite{basso2025poke}. Alice computes an endomorphism $\theta \in \mathrm{End}(E_0)$ of degree $3^b \cdot q(2^a-q)$ for some uniformly random value $q \in [0, 2^a)$. Let $\theta = \hat{\phi} \circ \psi$ where $\deg\phi=3^b$ and $\deg\psi = q(2^a-q)$. Alice can computes the $\phi : E_0 \to E_A$ by evaluating its kernel $\ker\phi = \ker\theta \cap E_0[3^b]$. Since $\psi = [3^{-b}]\circ\hat{\phi}\circ \theta$, she can compute $\psi$. Alice chooses uniformly random scalars $\alpha, \beta \in [0,2^a)$, $\gamma\in [0, 3^b)$ and $\delta_5 \in [0, 5^c)$. Alice's public key is
\begin{align*}
    E_A, (P_A, Q_A) &= ([\alpha]\psi(P_0), [\beta]\psi(Q_0)),\\
    (R_A, S_A) &= ([\gamma]\psi(R_0), [\gamma]\psi(S_0)),\\
    (X_A, Y_A) &= ([\delta_5]\psi(X_0), [\delta_5]\psi(Y_0)).
\end{align*}
Note that an attacker cannot recover the random scalars $\gamma$ and $\delta_5$ since the degree of $\psi$ is randomized. Alice's secret key is
$$
    \alpha, \beta, \delta_5, q.
$$
\subsubsection{Encryption}
Bob chooses a uniformly random scalar $\delta \in [0, 3^b)$ and computes isogenies $\phi_B : E_0\to E_B$ and $\phi_B' : E_A \to E_{AB}$ such that $\ker\phi_B = \langle R_0 + [\delta] S_0\rangle$ and $\ker\phi_B' = \langle R_A + [\delta] S_A\rangle$. For a uniformly random matrix $D_5 \in \mathrm{SL}_2(\mathbb{Z}/5^c\mathbb{Z})$ and a uniformly random scalar $\omega \in [0, 2^a)$, Bob computes
\begin{align*}
    (P_B, Q_B) &= ([\omega]\phi_B(P_0), [\omega^{-1}]\phi_B(Q_0))\\
    (P_{AB}, Q_{AB}) &= ([\omega]\phi_B'(P_A), [\omega^{-1}]\phi_B'(Q_A))\\
    (X_B, Y_B) &= D_5 \cdot (\phi_B(X_0), \phi_B(Y_0))\\
    (X_{AB}, Y_{AB}) &= D_5\cdot (\phi_B'(X_A), \phi_B'(Y_A)).
\end{align*}
The ciphertext is
\begin{align*}
    &E_B, (P_B, Q_B), (X_B, Y_B), E_{AB}, (P_{AB}, Q_{AB}),\text{ and }\\
    &ct = m\oplus \mathrm{KDF}(X_{AB}, Y_{AB}),
\end{align*}
where  KDF is a key derivation function.
\subsubsection{Decryption}
Alice recovers the points $(X_{AB}, Y_{AB})$ and computes $m$ from $ct$. The secrets $q, \alpha$ and $\beta$ are used to compute the isogeny $\psi'$ and $\delta_5$ is used to unmask the given torsion basis. Alice computes the message $m$ as
\begin{align*}
    (X_{AB}, Y_{AB}) &= ([\delta_5]\psi'(X_B), [\delta_5]\psi'(Y_B))\\
    m &= ct\oplus \mathrm{KDF}(X_{AB}, Y_{AB}),
\end{align*}
where KDF is the key derivation function used in the encryption algorithm and $\psi' : E_B \to E_{AB}$ with a two-dimensional representation $\ker\Psi' = \langle(P_B,[\alpha^{-1}]P_{AB}), (Q_B, [\beta^{-1}]Q_{AB}) \rangle$. The overall process of POK\'E protocol is described as a diagram in Figure \ref{figure: the poke diagram}.

\begin{figure}[H]
\centering
\begin{tikzpicture}[
  >=Latex,
  font=\small,
  every node/.style={align=left},
  box/.style={draw=none, inner sep=1pt},
  arrow/.style={->, thick},
  red/.style={text=red},
  orange/.style={text=orange},
  blue/.style={text=blue},
  lblue/.style={text=cyan},
  purple/.style={text=purple}
  ]

\node (E0) at (0, 2.5) {$E_0$};
\node (EA) at (5, 2.5) {$\textcolor{orange}{E_A}$};
\node (EB) at (0, 0) {$\textcolor{cyan}{E_B}$};
\node (EAB) at (5, 0) {$\textcolor{cyan}{E_{AB}}$};

\draw[arrow, draw=red] (E0) -- node[below, red]{$\psi$, $\deg\,\psi = q(2^a-q)$} (EA);
\draw[arrow, draw=blue] (E0) -- node[left, blue]{$\phi_B$} (EB);
\draw[arrow, draw=blue] (EA) -- node[right, blue]{$\phi_B'$} (EAB);
\draw[arrow, draw=red] (EB) -- node[below, red]{$\psi'$} (EAB);

\node[box, anchor=north east] at ($(E0)+(1,1.5)$) {
  $\begin{aligned}
  &P_0, Q_0 \in E_0[2^a]\\
  &R_0, S_0 \in E_0[3^b]\\
  &X_0, Y_0 \in E_0[5^c]
  \end{aligned}$
};

\node[box, anchor=north west] at ($(EA)+(-2.2,1.5)$) {
  $\begin{aligned}
  &\textcolor{orange}{P_A, Q_A} = [\textcolor{red}{\alpha}]\textcolor{red}{\phi}(P_0), [\textcolor{red}{\beta}]\textcolor{red}{\phi}(Q_0)\\
  &\textcolor{orange}{R_A, S_A} = [\textcolor{red}{\gamma}]\textcolor{red}{\phi}(R_0), [\textcolor{red}{\gamma}]\textcolor{red}{\phi}(S_0)\\
  &\textcolor{orange}{X_A, Y_A} = [\textcolor{red}{\delta_5}]\textcolor{red}{\phi}(X_0), [\textcolor{red}{\delta_5}]\textcolor{red}{\phi}(Y_0)
  \end{aligned}$
};

\node[box, anchor=south east] at ($(EB)+(2,-1.9)$) {
  $\begin{aligned}
  &\textcolor{cyan}{P_B, Q_B} \\
  &= [\textcolor{blue}{\omega}]\textcolor{blue}{\phi_B}(P_0), [\textcolor{blue}{1/\omega}]\textcolor{blue}{\phi_B}(Q_0)\\
  &\textcolor{cyan}{X_B, Y_B} \\
  &= \textcolor{blue}{D_5}\cdot (\textcolor{blue}{\phi_B}(X_0), \textcolor{blue}{\phi_B}(Y_0))
  \end{aligned}$
};

\node[box, anchor=south west] at ($(EAB)+(-2,-2.5)$) {
  $\begin{aligned}
  &\textcolor{cyan}{P_{AB}, Q_{AB}}\\
  &= [\textcolor{blue}{\omega}]\textcolor{blue}{\phi_B'}(\textcolor{orange}{P_A}), [\textcolor{blue}{1/\omega}]\textcolor{blue}{\phi_B'}(\textcolor{orange}{Q_A})\\
  &\textcolor{purple}{X_{AB}, Y_{AB}} \\
  &= \textcolor{blue}{D_5} \cdot (\textcolor{blue}{\phi_B'}(\textcolor{orange}{X_A}), \textcolor{blue}{\phi_B'}(\textcolor{orange}{Y_A}))\\
  &= [\textcolor{red}{\delta_5}]\textcolor{red}{\psi'}(\textcolor{cyan}{X_B}), [\textcolor{red}{\delta_5}]\textcolor{red}{\psi'}(\textcolor{cyan}{Y_B})
  \end{aligned}$
};

\end{tikzpicture}
\caption{The POK\'E protocol. Public parameters are in black; elements in \textcolor{red}{red} are part of the secret key or computed during decryption; \textcolor{orange}{orange} parts are public keys; \textcolor{blue}{blue} are ephemeral secrets; \textcolor{cyan}{light blue} are ciphertext parts; \textcolor{purple}{purple} shows the shared secret.}
\label{figure: the poke diagram}

\end{figure}

\begin{remark}
 POK\'E withstands the key recovery attack against SIDH. As discussed previously, the key recovery attack requires the following conditions: (1) access to the torsion-point information published by SIDH-based cryptographic schemes, (2) knowledge of the degree of the isogeny connecting two elliptic curves, and (3) that this degree is power-smooth to enable an efficient attack. As can be seen in the algorithm description, the publicly revealed torsion points are masked by random values, thereby addressing condition (1). Furthermore, by employing isogenies of arbitrary degree, the scheme mitigates conditions (2) and (3). In addition, since the isogeny degree is both unknown and sufficiently large, POKÉ also resists the backdoor attack against FESTA described in \cite{castryck2023polynomial}.
\end{remark}


\subsection{POK\'E Key Encapsulation Mechanism}
As the author mentioned in \cite{basso2025poke}, POK\'E public key encryption scheme can be converted to IND-CCA KEM by applying the Fujisaki-Okamoto (FO) transform \cite{fujisaki1999secure}. However, the paper does not present an explicit construction. In this work, we provide a complete FO-based KEM design, including the full algorithms, which we refer to as \textbf{POKEx}.

\begin{algorithm}[ht]
\caption{POKEx Key Genearation}
\begin{algorithmic}[1]
\vskip 1mm
\REQUIRE The starting curve $E_0$ and the bases $\langle P_0, Q_0\rangle= E_0[2^a]$, $\langle R_0, S_0 \rangle = E_0[3^b]$, and $\langle X_0, Y_0 \rangle \in E_0[5^c]$
\ENSURE $(pk, sk)$ 
\STATE Sample a random $q\in[0,2^a)$ such that $q(2^a-q)$ is coprime with 2, 3, and 5.
\STATE Generate a $q(2^a-q)$-isogeny $\phi:E_0\rightarrow E_A$.
\STATE Sample a random $\alpha_2, \beta_2 \xleftarrow{\$}\mathbb{Z}_{2^a}^{\times}$.
\STATE Sample a random $\gamma_3 \xleftarrow{\$} \mathbb{Z}_{3^b}^{\times}$.
\STATE Sample a random $\delta_5 \xleftarrow{\$}\mathbb{Z}_{5^c}^{\times}$.
\STATE Compute $P_A, Q_A = [\alpha_2]\phi(P_0), [\beta_2]\phi(Q_0)$.
\STATE Compute $R_A, S_A = [\gamma_3]\phi(S_0), [\gamma_3]\phi(R_0)$.
\STATE Compute $X_A, Y_A = [\delta_5]\phi(X_0), [\delta_5]\phi(Y_0)$.
\STATE Sample a random $s \xleftarrow{\$} \{0,1\}^{256}$.
\STATE Return $sk=(q, \alpha_2, \beta_2, \delta_5,s)$ and \newline $pk=(E_A, P_A, Q_A, R_A, S_A, X_A, Y_A)$.
\end{algorithmic}
\label{alg:pokex_keygen}
\end{algorithm}

\begin{algorithm}[ht]
\caption{POKEx Key Encapsulation}
\begin{algorithmic}[1]
\vskip 1mm
\REQUIRE A public key $pk$.
\ENSURE A shared key $K$ and a ciphertext $ct$.
\STATE Sample a random message $m \xleftarrow{\$} \{0,1\}^{256}$.
\STATE $ct \leftarrow \mathsf{POKE.ENC}(pk,m;G_{\lambda}(m))$.
\STATE Return $K = H_{\lambda}(m,ct)$.
\end{algorithmic}
\label{alg:pokex_encap}
\end{algorithm}

\begin{algorithm}[h]
\caption{POKEx Key Decapsulation}
\begin{algorithmic}[1]
\vskip 1mm
\REQUIRE A ciphertext $ct$, a public key $pk$, and a secret key $sk$
\ENSURE A shared key $K$
\STATE $m\leftarrow \mathsf{POKE.DEC}(sk,ct)$.
\STATE Compute $ct' \leftarrow \mathsf{POKE.ENC}(pk,m;G_{\lambda}(m))$.
\IF {$ct'=ct$}  $K\leftarrow H_{\lambda}(m,ct)$.
\ELSE \ $K\leftarrow H_{\lambda}(s,ct)$.
\ENDIF
\STATE Return $K$.
\end{algorithmic}
\label{alg:pokex_decap}
\end{algorithm}

\subsubsection{Setup and Key Generation}
Let $G = \{G_\lambda\}$ and $H = \{H_\lambda\}$ be families of cryptographic hash functions such that $G_\lambda : \mathcal{M}_\lambda \to \mathcal{R}_\lambda$ and $H_\lambda : \{0,1\}^* \to \mathcal{M}_\lambda$, where $\mathcal{M}_\lambda$ denotes the message space and $\mathcal{R}_\lambda$ is the randomness space for the POK\'E encryption.
For key generation, we compute $(pk, sk)$ from the POK\'E key generation algorithm. We then choose a uniformly random dummy message $s$, and define the secret key as the pair $(sk, s)$, while the public key is $pk$.

\subsubsection{Key Encapsulation}
Let $m\in \mathcal{M}_\lambda$ be a uniformly random message. The ciphertext is computed as $ct \leftarrow \mathrm{Enc}(pk, m; G_\lambda(m))$, where $\mathrm{Enc}$ denotes the POK\'E encryption algorithm, and $G_\lambda(m)$ provides the randomness used in the encryption. The shared key is derived as $K \leftarrow H_\lambda(m,ct)$.

\subsubsection{Key Decapsulation}
We extract $m \leftarrow \mathrm{Dec}(sk, ct)$ and compute $ct' \leftarrow \mathrm{Enc}(pk, m ;G(m))$. If $ct' = ct$, we compute the key $K \leftarrow H_\lambda(m, ct)$. Otherwise, we compute $K \leftarrow H_\lambda(s,ct)$.

The following Algorithms \ref{alg:pokex_keygen}, \ref{alg:pokex_encap}, and \ref{alg:pokex_decap} correspond to the key generation (setup), Encapsulation, and Decapsulation procedures of POKEx, respectively. In Algorithm \ref{alg:pokex_keygen}, Step 2 follows Algorithm 3 of \cite{basso2025poke} in our implementation. Within the algorithms, $\mathsf{POKE.ENC}$ denotes the POK\'E encryption (Algorithm 4 in \cite{basso2025poke}) and $\mathsf{POKE.DEC}$ denotes the POK\'E decryption (Algorithm 5 in \cite{basso2025poke}). The Figure \ref{fig:pokekem} depicts the POKEx algorithm.

\begin{figure*}[t]
\centering
\caption{POKEx: POK\'E-based KEM}\label{fig:pokekem}
\begin{tabular}{|lcl|}
\hline
$\mathsf{KEM.KeyGen()}$& &\\
~~~ $q \xleftarrow{\$} [0,2^a)$& &\\
~~~ $\alpha_2, \beta_2 \xleftarrow{\$}\mathbb{Z}_{2^a}^{\times}$& &\\
~~~  $\gamma_3 \xleftarrow{\$} \mathbb{Z}_{3^b}^{\times}$& &\\
~~~  $\delta_5 \xleftarrow{\$}\mathbb{Z}_{5^c}^{\times}$& &\\
~~~  $P_A, Q_A \leftarrow [\alpha_2]\phi(P_0), [\beta_2]\phi(Q_0)$& &\\
~~~  $R_A, S_A \leftarrow [\gamma_3]\phi(S_0), [\gamma_3]\phi(R_0)$& &\\
~~~   $X_A, Y_A \leftarrow [\delta_5]\phi(X_0), [\delta_5]\phi(Y_0)$& &\\
~~~   $s \xleftarrow{\$} \{0,1\}^{256}$& &\\
~~~   $sk=(q, \alpha_2, \beta_2, \delta_5,s)$& & $\mathsf{KEM.Encap}(pk)$\\
~~~ $pk=(E_A, P_A, Q_A, R_A, S_A, X_A, Y_A)$ &
~~~ $\begin{CD} @>~~~~~~~{\qquad pk \qquad}~~~~~~>>\end{CD}$
~~~ & $m \xleftarrow{\$} \{0,1\}^{256}$\\
~~~ && $ct \leftarrow \mathsf{POKE.ENC}(pk,m;G_{\lambda}(m))$\\
$\mathsf{KEM.Decap}(ct,pk,sk)$ && $K \leftarrow H_{\lambda}(m,ct)$\\
~~~ $m\leftarrow \mathsf{POKE.DEC}(sk,ct)$&$\begin{CD} @<~~~~~~~{\qquad ~~~ ct \qquad}~~~~~~<<\end{CD}$& return $(K,ct)$\\
~~~ $ct' \leftarrow \mathsf{POKE.ENC}(pk,m;G_{\lambda}(m))$ &&\\
~~~ if $ct'=ct$ &&\\
~~~ then return $K\leftarrow H_{\lambda}(m,ct)$ &&\\
~~~ else return $K\leftarrow H_{\lambda}(s,ct)$&&\\
\hline
\end{tabular}
\end{figure*}

\subsubsection{Security of POKEx}

We briefly describe the security of POKEx. The outline is as follows. The security of POK\'E PKE is based on the hardness of the C-POKE problem in \cite{basso2025poke}, and according to \cite{basso2025poke}, it is IND-CPA secure PKE in the random oracle model under the assumption that the C-POKE problem is hard. By adopting the FO-transform, it is well-known that an IND-CCA secure KEM can be constructed from an IND-CPA secure PKE, which has been proven in \cite{fujisaki1999secure}. This application has already been mentioned in \cite{basso2025poke}, and that is what we call POKEx. For the security of POKEx, we only restate the security assumption and the relevant theorem on which the security of POK\'E PKE is based.

\begin{problem}[C-POKE]\cite{basso2025poke}
Let \( E_0 \) be a supersingular elliptic curve defined over \( \mathbb{F}_{p^2} \), with known endomorphism ring. Let \( P_0, Q_0 \) be a basis of \( E_0[2^a] \), \( R_0, S_0 \) a basis of \( E_0[3^b] \), and \( X_0, Y_0 \) a basis of \( E_0[5^c] \).

Let \( \phi: E_0 \rightarrow E_A \) be an isogeny of degree \( q(2^a - q) \), for some unknown value \( q \in [0, 2^a] \). Write
\[
\begin{aligned}
  P_A, Q_A &= \left[ \alpha_2 \right] \phi(P_0), \quad \left[ \beta_2 \right] \phi(Q_0), \\
  R_A, S_A &= \left[ \gamma_3 \right] \phi(R_0), \quad \left[ \gamma_3 \right] \phi(S_0), \\
  X_A, Y_A &= \left[ \delta_5 \right] \phi(X_0), \quad \left[ \delta_5 \right] \phi(Y_0),
\end{aligned}
\]
where \( \alpha_2, \beta_2, \gamma_3, \delta_5 \) are uniformly random scalars from
\[
\mathbb{Z}_{2^a}^\times \times \mathbb{Z}_{2^a}^\times \times \mathbb{Z}_{3^b}^\times \times \mathbb{Z}_{5^c}^\times.
\]

Let \( \psi: E_0 \rightarrow E_B \) be an isogeny of degree \( 3^b \), and write \( \psi': E_A \rightarrow E_{AB} \) for its pushforward \( \phi_\ast \psi \). Write
\[
\begin{aligned}
  P_B, Q_B &= \left[ \omega_2 \right] \psi(P_0), \quad \left[ 1/\omega_2 \right] \psi(Q_0), \\
  P_{AB}, Q_{AB} &= \left[ \omega_2 \right] \psi'(P_A), \quad \left[ 1/\omega_2 \right] \psi'(Q_A), \\
  X_B, Y_B &= \mathbf{D}_5 \begin{bmatrix} \psi(X_0), & \psi(Y_0) \end{bmatrix}^T, \\
  X_{AB}, Y_{AB} &= \mathbf{D}_5 \begin{bmatrix} \psi'(X_A), & \psi'(Y_A) \end{bmatrix}^T,
\end{aligned}
\]
where \( \omega_2 \) is a uniformly random scalar from \( \mathbb{Z}_{2^a}^\times \), and \( \mathbf{D}_5 \) is a uniformly random matrix from \( \mathrm{SL}_2(\mathbb{Z}_{5^c}) \).
Given 
\begin{align*}
    &\left(E_0, (P_0, Q_0), (R_0, S_0), (X_0, Y_0) \right),\\
    &\left(E_A, (P_A, Q_A), (R_A, S_A), (X_A, Y_A) \right),\\
    &\left(E_B, (P_B, Q_B), (X_B, Y_B) \right),\\
    &\left(E_{AB}, (P_{AB}, Q_{AB}) \right),
\end{align*}
compute the points \( X_{AB}, Y_{AB} \).
\end{problem}

\begin{theorem}\cite{basso2025poke}
The POK\'E protocol, where the key-derivation function KDF is
modeled as a random oracle, is IND-CPA secure in the random oracle model
under the assumption that the C-POKE problem is hard.
\end{theorem}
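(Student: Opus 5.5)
The plan is the standard reduction for hashed-ElGamal-style schemes in the random oracle model. Suppose an adversary $\mathcal{A}$ breaks the IND-CPA security of POK\'E with non-negligible advantage $\epsilon$ while making at most $Q$ queries to the random oracle $\mathrm{KDF}$. We build a solver $\mathcal{B}$ for the C-POKE problem. On input a C-POKE instance, $\mathcal{B}$ sets the public key to $pk=(E_A,P_A,Q_A,R_A,S_A,X_A,Y_A)$ taken from the instance. This has exactly the distribution of an honest key-generation output, because the instance samples $\phi$, $\alpha_2,\beta_2,\gamma_3,\delta_5$ as in Algorithm \ref{alg:pokex_keygen}.

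$\mathcal{B}$ answers $\mathrm{KDF}$ queries by lazy sampling: fresh uniform strings, with a table to keep the answers consistent. When $\mathcal{A}$ outputs two messages $m_0,m_1$, $\mathcal{B}$ builds the challenge ciphertext from the instance data $E_B,(P_B,Q_B),(X_B,Y_B),E_{AB},(P_{AB},Q_{AB})$ together with a uniformly random string $c^\ast$ in place of $m_b\oplus\mathrm{KDF}(X_{AB},Y_{AB})$. We must check that these components are distributed as in a real encryption. In the C-POKE problem, $\psi$ is a $3^b$-isogeny, $\psi'=\phi_\ast\psi$, $\omega_2$ is uniform, and $\mathbf{D}_5$ is uniform in $\mathrm{SL}_2$. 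Bob's encryption uses $\ker\phi_B=\langle R_0+[\delta]S_0\rangle$ and $\ker\phi_B'=\langle R_A+[\delta]S_A\rangle=\phi(\ker\phi_B)$, since $\gamma_3$ is a unit. So $\phi_B'$ is the pushforward, and the two distributions agree, up to the minor mismatch between the range of $\omega$ and the units $\mathbb{Z}_{2^a}^\times$, which we align.

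Next comes the standard game hop. Let $\mathsf{Ask}$ be the event that $\mathcal{A}$ queries $\mathrm{KDF}$ on the true pair $(X_{AB},Y_{AB})$. If $\mathsf{Ask}$ does not occur, then $\mathrm{KDF}(X_{AB},Y_{AB})$ is uniform and independent of $\mathcal{A}$'s view. The real challenge ciphertext is then identically distributed to the simulated one, and $\mathcal{A}$'s view is independent of $b$. Hence $\Pr[\mathsf{Ask}]\ge \epsilon$, up to the factor-of-two convention in the advantage definition. At the end, $\mathcal{B}$ picks a uniformly random entry from the $\mathrm{KDF}$ query list and outputs it as $(X_{AB},Y_{AB})$. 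It succeeds with probability at least $\epsilon/Q$, which is non-negligible and contradicts the hardness of C-POKE.

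The main obstacle is the indistinguishability of the simulated challenge, specifically the identification of Bob's encryption distribution with the C-POKE instance distribution. One subtlety is that the uniform choice of $\delta$ gives exactly the pushforward structure, which holds because $\deg\phi$ is coprime to $3$ and $\gamma_3$ is a unit. The second subtlety is the masking conventions, which must match. The key-generation algorithm here writes $R_A,S_A=[\gamma_3]\phi(S_0),[\gamma_3]\phi(R_0)$ with the points swapped, so we would need to note that swapping the basis only reparametrizes the uniform kernel choice. We would also restrict all scalars to units consistently.

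A further refinement is to replace the random-guess step with a verification of candidates. Given a candidate $(X,Y)$, one could test it via the Weil pairing $e_{5^c}(X,Y)$ against $e_{5^c}(X_A,Y_A)^{3^b}$. However, this does not pin the pair down uniquely, so the loss factor $Q$ is retained, as in the computational (non-gap) version of the argument.
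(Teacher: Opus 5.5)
Your reduction is correct and is the standard hashed-ElGamal-style random-oracle argument that the paper invokes only by citing \cite[Theorem~8]{basso2025poke}, so it is essentially the same proof. The one slip is your aside on Algorithm~\ref{alg:pokex_keygen}: the swapped $(R_A,S_A)$ there is a typo, not a harmless reparametrization, because it gives $\langle R_A+[\delta]S_A\rangle=\phi(\langle S_0+[\delta]R_0\rangle)$, which differs from $\phi(\ker\phi_B)$ unless $\delta\equiv\pm1\pmod{3^b}$, so $\phi_B'$ would not be the pushforward and neither correctness nor your distribution match would hold; however, the theorem concerns the unswapped POK\'E scheme of Section~\ref{sec_poks}, so your argument is unaffected.
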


\begin{proof}
    The proof follows directly from \cite[Theorem~8]{basso2025poke}.
\end{proof}

\begin{theorem}\cite{hofheinz2017modular}
For any IND-CCA adversary B against $\mathrm{POKEx}$, issuing at most $q_G$ (resp. $q_H$) queries to the random oracle $G$ (resp. $H$), there exists an IND-CPA adversary $A$ against POK\'E $\mathrm{PKE}$ with
\begin{align*}
    \mathsf{Adv}_{\mathsf{POKEx}}^{\mathsf{IND\text{-}CCA}}(B)\leq \frac{2q_G+q_H+1}{2^n}+3\cdot \mathsf{Adv}_{\mathsf{POKE\text{-}PKE}}^{\mathsf{IND\text{-}CPA}}(A),
\end{align*}
where the size of the message space of $\mathrm{PKE}$ is $2^n$.
\end{theorem}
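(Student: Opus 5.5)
The plan is to recognize POKEx as exactly the $\mathsf{U}^{\not\bot}\circ\mathsf{T}$ construction of Hofheinz, H\"ovelmanns and Kiltz \cite{hofheinz2017modular}, and to chain their two modular theorems. First, the derandomization step $\mathsf{T}$ turns POK\'E PKE into the deterministic scheme $\mathrm{PKE}_1=\mathsf{T}[\mathrm{PKE},G]$, with $\mathrm{Enc}_1(pk,m)=\mathsf{POKE.ENC}(pk,m;G_\lambda(m))$, which is what Algorithms~\ref{alg:pokex_encap} and~\ref{alg:pokex_decap} use. Second, the implicit-rejection transform $\mathsf{U}^{\not\bot}$ with hash $H$ and secret seed $s$ yields the KEM, which outputs $H_\lambda(m,ct)$ on success and $H_\lambda(s,ct)$ on failure. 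I will check explicitly that Figure~\ref{fig:pokekem} matches their definition, including re-encryption in decapsulation and keying $H$ on $(m,ct)$.

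For the $\mathsf{T}$ step, I will invoke the HHK theorem that an IND-CPA adversary against $\mathrm{PKE}$ yields an OW-PCA (in fact OW-PCVA) adversary against $\mathrm{PKE}_1$. Its loss is a term of order $q_G/|\mathcal M|$ plus a constant multiple of $\mathsf{Adv}^{\mathsf{IND\text{-}CPA}}$. The idea is that unless the adversary queries $G$ on the challenge message, $G(m^*)$ is uniform randomness, so the ciphertext is an honest POK\'E encryption. A query on $m^*$ can then be turned into an IND-CPA distinguisher, using the bound $\mathsf{OW}\le \mathsf{IND}+1/|\mathcal M|$. Correctness of POK\'E must be used here. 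I will assume POK\'E is perfectly correct, which is plausible since decryption recovers $X_{AB},Y_{AB}$ exactly via the commutative diagram. Under that assumption the $\delta$-correctness and $\gamma$-spreadness terms vanish or become negligible, and the plaintext-checking oracle can be simulated by re-encryption.

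For the $\mathsf{U}^{\not\bot}$ step, I will apply the HHK theorem $\mathsf{Adv}^{\mathsf{IND\text{-}CCA}}_{\mathrm{KEM}}\le \frac{q_H}{|\mathcal M|}+\mathsf{Adv}^{\mathsf{OW\text{-}PCA}}_{\mathrm{PKE}_1}$. The simulator answers decapsulation queries by lazily sampling $H$. It then patches consistency between $H(m,ct)$ and decapsulation using the plaintext-checking oracle, which is trivial here because $\mathrm{PKE}_1$ is deterministic and checking amounts to re-encrypting. The term $q_H/2^n$ accounts for the adversary guessing $s$.

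Combining the two bounds and collecting terms gives $\frac{2q_G+q_H+1}{2^n}+3\,\mathsf{Adv}^{\mathsf{IND\text{-}CPA}}_{\mathsf{POKE\text{-}PKE}}(A)$ with $|\mathcal M|=2^n$. The main obstacle is bookkeeping the constants: getting exactly $2q_G$, the $+1$, and the factor $3$ requires following the precise HHK statement for perfectly correct schemes. It also requires confirming that POK\'E, with its message space $\{0,1\}^{256}$ and randomness $(\delta,\omega,D_5)$ derived from $G$, meets their hypotheses. If the constants do not line up, I would fall back to the combined corollary in \cite{hofheinz2017modular} for $\mathsf{FO}^{\not\bot}$ with $\delta=0$, which I expect is stated exactly in this form.
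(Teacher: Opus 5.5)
Your proposal is correct and follows the paper's argument. Like the paper, you read POKEx as $\mathsf{U}^{\not\bot}\circ\mathsf{T}$ and combine Theorem~3.2 of \cite{hofheinz2017modular} (IND-CPA version, with $\delta=0$ and no validity-checking queries, giving $\frac{2q_G+1}{2^n}+3\,\mathsf{Adv}^{\mathsf{IND\text{-}CPA}}_{\mathsf{POKE\text{-}PKE}}$) with Theorem~3.4 ($\frac{q_H}{2^n}+\mathsf{Adv}^{\mathsf{OW\text{-}PCA}}$). The one difference is that the paper takes perfect correctness of POK\'E from \cite{idris2025transforming} rather than assuming it, and you should cite it too, since it must hold for every key and every choice of encryption randomness, which your ``commutative diagram'' plausibility remark does not establish.
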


\begin{proof}
    POK\'E PKE is perfectly correct \cite{idris2025transforming}. Hence, the proof follows directly from combining the results from Theorem 3.2 and Theorem 3.4 in \cite{hofheinz2017modular}.
\end{proof}

\section{Implementation Results}
\label{sec_im}
In this section, we present the comparison result of M-SIDH, terSIDH, and POK\'E. We selected MSIDH, terSIDH, and CSIDH as comparison baselines for POKEx for the following reasons. MSIDH is regarded as more efficient than MD-SIDH among masked variants designed to counter the SIDH attacks, and terSIDH is considered the most efficient among alternative approaches that mitigate those attacks by introducing artificial orientation. Lastly, CSIDH is a CRS-based isogeny key exchange algorithm that is not affected by SIDH key-recovery attacks. Except for MSIDH, terSIDH, POKEx, and CSIDH use parameters targeting NIST security level 1. 

\subsection{Parameter selection}
\subsubsection{Parameter selection for MSIDH}
The parameters of MSIDH follow those specified in \cite{kim2024performance}. To follow NIST security level 1, MSIDH requires a large finite field of at least 5911 bits\cite{fouotsa2023m}. However, as denoted in \cite{kim2024performance}, MSIDH still performs significantly slower than other SIDH variants. Therefore, \cite{kim2024performance} also used the 4096-bit parameters mainly to provide a rough performance estimate. Following the same approach, we also adopt the MSIDH-4096 parameters proposed in \cite{kim2024performance}, targeting the classical 94-bit security level.

For $\texttt{MSIDH-4096}$, the following 4095-bit prime number is used.
\begin{align*}
    p_{}=2^2\ell_1 \ell_2 \cdots \ell_{417}-1
\end{align*}
Here, $\ell_1,\cdots\ell_{416}$ denote the first 228 odd prime numbers, and $\ell_{417}=2897$. Alice’s torsion subgroup has order $A=2^2\cdot \ell_1\ell_3\cdots\ell_{417}$, while Bob’s torsion subgroup has order $B=\ell_2\ell_4\cdots\ell_{416}$. This choice of primes ensures that the subgroup orders of Alice and Bob are comparable in size and that the degrees of the corresponding isogenies align appropriately. The resulting parameters provide a classical security level of 94 bits and an estimated quantum security level of about 47 bits.

Using this prime, M-SIDH uses the following curve as the starting curve defined over $\mathbb{F}_{p^2} = \mathbb{F}_p[i]$, where $i^2 = -1$:
\begin{align*}
E: y^2 = x^3 + M_{4095}\cdot i x^2 + x.
\end{align*}
The exact coefficients are available in \cite{kim2024performance}.

\subsubsection{Parameter selection for terSIDH}
Targeting NIST security level 1, the following 1570-bit prime number is used.
\begin{align*}
    p_{}&=4\cdot 363\cdot\ell_1\ell_2\cdots\ell_{185}-1\\
    A&=4\prod_{i=1}^{92}\ell_{2i}\\
    B&=\prod_{i=1}^{93}\ell_{2i-1}
\end{align*}
where, $\ell_1\cdots \ell_{185}$ denote the first 185 odd prime numbers. 
Using this prime, terSIDH uses the following curve as the starting curve defined over $\mathbb{F}_{p^2}$:
\begin{align*}
E: y^2 = x^3 + 6 x^2 + x.
\end{align*}

\subsubsection{Parameter selection for POKEx}
For NIST security level 1, it uses a 431-bit prime number such as
$$
    p = 2^{129}\cdot3^{164}\cdot 5^{18} - 1.
$$
The starting curve is $E_0 : y^2 = x^3 + x$ defined over $\mathbb{F}_{p^2}$ such that $j(E_0) = 1728$.

\subsection{Performance Results}

\begin{table*}[t]
\centering
\caption{Public key and secret key sizes targeting NIST security level 1}
\label{tab:keysize}
\vspace{0.5em}
\setlength{\tabcolsep}{4pt}       
\begin{tabular}{@{}lcccc:c:cc@{}}
\toprule
 & \textbf{M-SIDH} & \textbf{terSIDH} & \textbf{POKEx} & \textbf{CSIDH-4096} & \textbf{ML-KEM-512} & \textbf{ECDH (secp256r1)} & \textbf{DH (MODP-2048)}\\
\midrule
$\log_2 p$ 
  & 5911  & 1570  & 431 & 4095 & - & 256 & 2048 \\
\textbf{pk size (B)} 
  & 4434  &  1178 & 324 & 512 & 800 & 65 & 256\\
\textbf{sk size (B)} 
  & 369 & 24 &  54 & 35 & 1632&  32 & 256\\
\bottomrule
\end{tabular}
\end{table*}

  \begin{table*}[t]
\centering
\caption{Performance comparison of key exchange algorithms in C}
\label{tab:C}
\centering
\begin{tabular}{ccccc:c:cc}
\thickhline
                  &     \texttt{MSIDH-4096}  & terSIDH & CSIDH-4096 & POKEx  & ML-KEM-512 &ECDH(secp256r1) &DH (MODP-2048)\\ \thickhline
Classic Security        &       94 & 128 & 128 & 128  & 128 & 128&128\\ 
Key exchange (ms) & 189,769.19 & 6,510.83 & 19,941.20 &306.95  &  0.10 & 1.43 &  2.26 \\
\thickhline
\end{tabular}
\end{table*}

First, we provide the uncompressed versions of public key and secret key sizes of M-SIDH, terSIDH, POKEx, and CSIDH at NIST security level 1. The CPU used for the measurements was an Intel Core i9-10980XE with a clock speed of 3.0 GHz, and the tests were conducted on an Ubuntu 20.04 LTS operating system using the optimization level -O3 and the gcc 9.4.0 compiler.

Also, for comparison with standardized algorithms from the NIST PQC competition as well as representative pre-quantum key establishment mechanisms, we selected ML-KEM, ECDH, and DH. For the post-quantum setting, we chose ML-KEM-512, which achieves NIST security level 1 and corresponds to approximately 128-bit classical security. For pre-quantum algorithms, ECDH over the NIST P-256 (secp256r1) curve and DH-2048 were employed, both of which are commonly regarded as providing comparable 128-bit classical security.

All implementations used in this work are reference implementations. For ML-KEM, we adopted the official Round 3 reference implementation provided by the NIST PQC project. For ECDH and DH, we used the \texttt{wolfSSL} cryptographic library \cite{wolfssl}. In particular, for finite-field Diffie–Hellman, we used the standardized MODP-2048 group (RFC 3526, Group 14), where public keys and shared secrets have a fixed size of 256 bytes.

In Table~\ref{tab:C}, \textbf{Key exchange} denotes the total time required for Alice and Bob to establish a shared key. Specifically, \textbf{Key exchange} is the total time of the following four steps: 1) Alice generates her public value from her private key and sends it to Bob, 2) Bob generates his public value from his private key and sends it to Alice, 3) Alice computes the shared secret using Bob’s public value, and 4) Bob computes the shared secret using Alice’s public value. For the KEM algorithm, \textbf{Key exchange} is the total time of 1) Key generation, 2) Encapsulation, and 3) Decapsulation.

As shown in Table \ref{tab:C}, POKEx is the most efficient key exchange method among isogeny-based algorithms. First, MSIDH offers a lower security strength than the other algorithms and exhibits the slowest performance, indicating that it is impractical for real-world implementations. POKEx is about 64.97 times faster than CSIDH and about 21.21 times faster than terSIDH. These results show that POKEx currently provides the most efficient key exchange algorithm among isogeny-based cryptosystems.

Nevertheless, when compared with other post-quantum cryptographic schemes and classical pre-quantum algorithms such as ECC- and DH-based key exchange, further performance improvements of POKEx are still required. In particular, compared to ML-KEM, POKEx exhibits considerably slower performance. However, as discussed in \cite{stebila2024security}, ML-KEM suffers from a notable drawback: its substantially larger public keys and ciphertexts introduce non-negligible bandwidth overhead compared to ECDH, which has direct implications for practical protocol design. Recent TLS-level evaluations further corroborate this observation, showing that while ML-KEM offers excellent computational performance, it incurs a substantial bandwidth overhead due to its large public keys and ciphertexts, highlighting that rather than computation, communication cost remains a critical bottleneck in post-quantum protocol design \cite{montenegro2025performance}.

In this respect, POKEx offers an important advantage that its key sizes are significantly smaller than those of ML-KEM, resulting in reduced communication overhead. This bandwidth efficiency suggests that POKEx has considerable potential for further optimization and may serve as a competitive alternative in scenarios where both post-quantum security and communication efficiency are critical.

\section{Conclusion}
\label{sec_conclude}

This paper explicitly specified a POK\'E-based key encapsulation mechanism -- POKEx -- and evaluated its performance against representative isogeny-based key exchange algorithms at the same security level. Our comparison algorithms consist of M-SIDH, the most efficient masking-based countermeasure against SIDH key recovery attacks; terSIDH, the most efficient SIDH-variant with conceptually different countermeasure; and CSIDH, a CRS-based scheme that is unaffected by SIDH key recovery attacks.

Targeting NIST security level 1, M-SIDH exhibited the slowest performance. POKEx outperformed terSIDH and CSIDH by factors of 21.21 and 64.97, respectively. These results show that POKEx is currently the most efficient isogeny-based key exchange algorithm.

On the other hand, when compared with other post-quantum cryptographic schemes and classical pre-quantum algorithms such as ECC- and DH-based key exchange, further performance improvements of POKEx are still required. Nevertheless, POKEx offers an important advantage over many PQC schemes in terms of key size, which can lead to reduced bandwidth consumption and lower communication overhead when deployed in practical settings such as TLS. This suggests that, with further optimization, POKEx has the potential to become a competitive candidate for post-quantum key establishment.

\bibliographystyle{IEEE}
\bibliography{iso}

\begin{IEEEbiography}[{\includegraphics[width=1in,height=1.25in,clip,keepaspectratio]{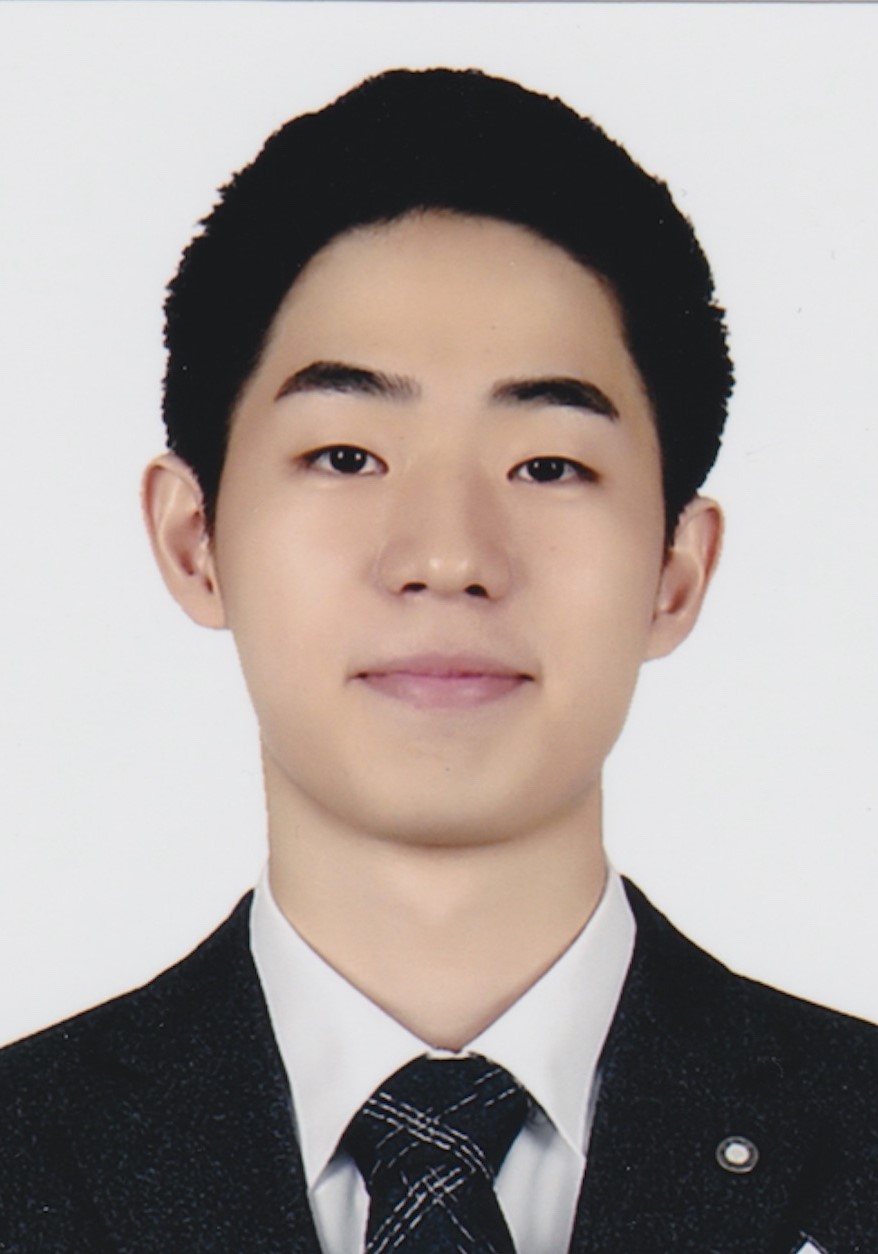}}]{Hyeonhak KIM} received the B.A. degree in cybersecurity from Korea University, in 2017. He is currently pursuing the Ph.D. degree in the School of Cybersecurity, Korea University, with a focus on post-quantum cryptography, especially isogeny-based cryptosystems, and quantum algorithms.
\end{IEEEbiography}

\begin{IEEEbiography}[{\includegraphics[width=1in,height=1.25in,clip,keepaspectratio]{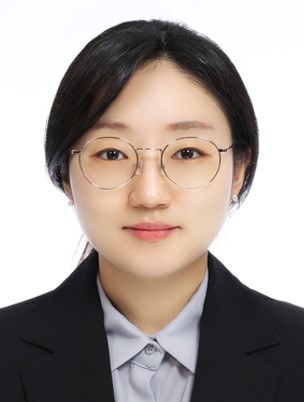}}]{Suhri KIM} received her B.A. degree in mathematics and M.A. in information security from Korea University in 2014 and 2016, respectively. She received her Ph.D. degree in the Graduate School of Information Security at Korea University in 2020. She currently holds the position of Associate Professor in the School of Mathematics, Statistics and Data Science at Sungshin Women's University. Her research focuses on post-quantum cryptography and efficient computations for isogeny-based cryptosystems.
\end{IEEEbiography}

\EOD

\end{document}